\newcommand{\comment}[1]{}
\newcommand{\seclabel}[1]{\label{sec:#1}}
\newcommand{\secref}[1]{\mbox{Section~\ref{sec:#1}}}
\newcommand{\figlabel}[1]{\label{fig:#1}}
\newcommand{\figref}[1]{\mbox{Figure~\ref{fig:#1}}}
\newcommand{\eqlabel}[1]{\label{eq:#1}}
\renewcommand{\eqref}[1]{(\ref{eq:#1})}
\newtheorem{thm}{Theorem}{\bfseries}{\itshape}
\newcommand{\thmlabel}[1]{\label{thm:#1}}
\newcommand{\thmref}[1]{Theorem~\ref{thm:#1}}
\newtheorem{lem}{Lemma}{\bfseries}{\itshape}
\newcommand{\lemlabel}[1]{\label{lem:#1}}
\newcommand{\lemref}[1]{Lemma~\ref{lem:#1}}
\newtheorem{cor}{Corollary}{\bfseries}{\itshape}
\newcommand{\corlabel}[1]{\label{cor:#1}}
\newcommand{\corref}[1]{Corollary~\ref{cor:#1}}
\newtheorem{assumption}{Assumption}{\bfseries}{\rm}
\newcommand{\R}{\mathbb{R}}
\newcommand{\Sp}{\mathbb{S}}
\title{\MakeUppercase{Algorithms for Marketing-Mix Optimization}}
\author{Joachim Gudmundsson%
	\and Pat Morin%
	\and Michiel Smid}
\newcommand{\cost}{\operatorname{cost}}
\newcommand{\ppu}{\operatorname{ppu}}
\newcommand{\val}{\operatorname{profit}}
\newcommand{\eps}{\epsilon}
\begin{document}
\maketitle
\begin{abstract}
  Algorithms for determining quality/cost/price tradeoffs in saturated
  markets are considered.  A product is modeled by $d$ real-valued
  qualities whose sum determines the unit cost of producing the product.
  This leads to the following optimization problem: given a set of $n$
  customers, each of whom has certain minimum quality requirements and a
  maximum price they are willing to pay, design a new product and select
  a price for that product in order to maximize the resulting profit.

  An $O(n\log n)$ time algorithm is given for the case, $d=1$, of linear
  products, and $O(n(\log n)^{d+1})$ time approximation algorithms
  are given for products with any constant number, $d$, of qualities.
  To achieve the latter result, an $O(nk^{d-1})$ bound on the complexity
  of an arrangement of homothetic simplices in $\R^d$ is given, where $k$
  is the maximum number of simplices that all contain a single points.
\end{abstract}

\section{Introduction}

Revealed preference theory \cite{v06} is a method of determining a course
of business action through the review of historical consumer behaviour.  In
particular, it is a method of inferring an individual's or a group's
preferences based on their past choices.  The \emph{marketing mix}
\cite{kpkl05} of a product consists of the 4 Ps: Product, price, place, and
promotion.  In the current paper, we present algorithms for optimizing the
first two of these by using data about consumer's preferences.  That is, we
show how, given data on consumer preferences, to efficiently choose a
product and a price for that product in order to maximize profit.

\begin{figure}
  \begin{center}
    \includegraphics{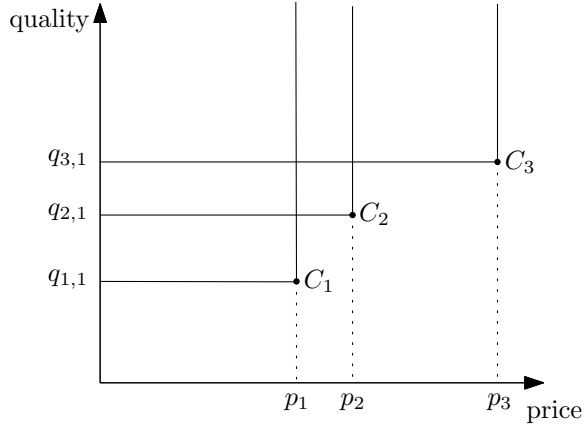}
  \end{center}
  \caption{A sample market with $d=1$ and $n=3$. A customer will consider
           any product that is in their upper left quadrant.}
  \figlabel{intro}
\end{figure}

Refer to \figref{intro}.  A product $P=(p,q_1,\ldots,q_d)$ is defined
by a real-valued \emph{price}, $p$, and a number of real-valued
orthogonal \emph{qualities}, $q_1,\ldots,q_d$.  The \emph{market} for
a product is a collection of customers $C=\{C_1,\ldots,C_n\}$, where
$C_i=(p_i,q_{i,1},\ldots,q_{i,d})$.  A customer will purchase the least
expensive product that meets all their minimum quality requirements and
whose price is below their maximum price.  That is, the customer $C_i$
will \emph{consider} the product $P=(p,q_1,\ldots,q_d)$ if $p \le p_i$
and $q_j \ge q_{i,j}$ for all $j\in\{1,\ldots,d\}$.  The customer $C_i$
will \emph{purchase} the product if has minimum price among all available
products that $C_i$ considers.

We consider markets that are \emph{saturated}.  That is, for every customer
$C_i$ there is an existing product that satisfies $C_i$'s
requirements and among all products that satisfy $C_i$'s requirements,
$C_i$ will choose the least expensive product.  From the point of view of a
manufacturer introducing one or more new products, this means that all
customers are \emph{Pareto optimal}, i.e., there are no two customers $C_i$
and $C_j$ such that $q_{i,k} > q_{j,k}$ for all $k\in\{1,\ldots,d\}$ and
$p_i < p_j$.  This is because, in a saturated market, $C_j$ and $C_i$ will
purchase the same product, namely the lowest-priced product that satisfies
$C_i$'s (and therefore also $C_j$'s) requirements.  When modelling a
saturated market, there is no need to explicitly consider existing products
since these can be encoded into the customers themselves.

As an example, consider a market for computers in which an example customer
$C_i$ may be looking for a computer with a minimum of 8 GB of RAM, a CPU
benchmark score of at least 3000, a GPU benchmark score of at least 2000,
and be willing to pay at most \$1500.  In addition, there is already a
computer on the market which meets these requirements and retails for
\$1200.  Thus, this customer would be described by the vector
$(1200,8,3000,2000)$.  If a manufacturer introduces a new product
$(1199,8,3500,2000)$ (a computer with 8 GB of RAM, a CPU benchmark score of
3500 and a GPU benchmark score of 2000 retailing for \$1199) then this
customer would select this new product over their current choice.

By appropriately reparameterizing the axes, we can assume that the cost,
$\cost(P)$, of manufacturing a product $P=(p,q_1,\ldots,q_d)$ is equal to
the sum of its qualities
\[
   \cost(P) = \sum_{i=1}^d q_i \enspace .
\]
The \emph{profit per unit sold} of $P$ is therefore
\[
   \ppu(P) = p-\cost(P) \enspace .
\]
In this paper we consider algorithms that a manufacturer can use when
a choosing new product to introduce into an existing saturated market
with the goal being to maximize the profit obtained.  More precisely,
given a Pareto-optimal \emph{market} of customers $M=\{C_1,\ldots,C_n\}$,
each have $d$ qualities, the $\textsc{ProductDesign}(d)$ problem is to
find a product $P^*\in\R^{d+1}$ such that
\[
  \val(P^*) = \ppu(P^*)
    \times 
      \left| \left\{ i:\mbox{$C_i$ purchases $P^*$} \right\} \right|
\]
is maximized.  To the best of our knowledge, this is the first time a
problem like this has been considered from an algorithmic perspective.

In the remainder of the paper we give an $O(n\log n)$ time algorithm for
$\textsc{ProductDesign}(1)$ (\secref{1-d}), and $O(n(\log n)^{d+1})$ time
approximation schemes for $\textsc{ProductDesign}(d)$ (\secref{2-d}
and \secref{d-d}).  \secref{conclusion} summarizes our results and
concludes with directions for future research.

\section{Linear products}
\seclabel{1-d}

In this section, we consider the simplest case, when a manufacturer
wishes to introduce a new product in which the quality of a product
has only one dimension.  Examples of such markets include, for example,
suppliers to the construction industry in which items (steel I-beams,
finished lumber, logs) must have a certain minimum length to be used for
a particular application.  An overly long piece can be cut down to size,
but using two short pieces instead of one long piece is not an option.

Throughout this section, since $d=1$, we will use the shorthand $P=(p,q)$
for the product being designed and $q_i$ for $q_{i,1}$.  Thus, we have a
set of customers $M=\{(p_1,q_1),\ldots,(p_{n},q_n)\}$ and we are searching
for a point $P^*=(p^*,q^*)$ that maximizes 
\[
   \val(p^*,q^*) = (p^*-q^*)
     |\{i : \mbox{$p^*\le p_i$ and $q^*\ge q_i$}\}|  \enspace .
\]

Our algorithm is an implementation of the \emph{plane-sweep} paradigm
\cite{bo79}. The correctness of the algorithm relies on two lemmas about
the structure of the solution space.  The first lemma is quite easy:
\begin{lem}\lemlabel{discrete}
  The value $(p^*,q^*)$ that maximizes $\val(p^*,q^*)$ is obtained when
  $p^* = p_i$ and $q^*=q_j$ for some $i,j\in\{1,\ldots,n\}$.
\end{lem}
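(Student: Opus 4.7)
The plan is a standard exchange argument: take any optimal $(p^*,q^*)$ and show that if $p^*\notin\{p_1,\ldots,p_n\}$ or $q^*\notin\{q_1,\ldots,q_n\}$, we can construct a strictly better solution, contradicting optimality. I would first reduce to the case $\val(p^*,q^*)>0$, so that the customer set
\[
  S = \{i : p^*\le p_i \text{ and } q^*\ge q_i\}
\]
is non-empty and $p^*>q^*$. The degenerate case $\val\le 0$ is handled separately by just exhibiting some grid point achieving this value.

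Next I would snap the $p$-coordinate. Suppose $p^*\neq p_k$ for every $k$. Since $S\neq\emptyset$, the set $\{p_k : p_k\ge p^*\}$ is non-empty and has a minimum $p_i>p^*$. Replace $p^*$ by $p_i$. The key verification is that $S$ is unchanged: by minimality of $p_i$ no customer $k$ has $p^*\le p_k<p_i$, so $\{k : p^*\le p_k\}=\{k : p_i\le p_k\}$, and the $q$-condition is untouched. But $p^*-q^*$ has strictly grown, so $\val$ strictly increased, contradicting optimality.

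The argument for the $q$-coordinate is entirely symmetric: if $q^*\neq q_k$ for every $k$, then $\{q_k : q_k\le q^*\}$ has a maximum $q_j<q^*$ (again using $S\neq\emptyset$), and replacing $q^*$ by $q_j$ leaves $S$ unchanged while strictly increasing $p^*-q^*$.

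The only real subtlety I anticipate is ensuring the required extremum (smallest $p_k\ge p^*$, largest $q_k\le q^*$) actually exists; this is precisely what non-emptiness of $S$ guarantees, which is why I first reduce to $\val>0$. Otherwise the whole proof is a two-coordinate \emph{snap-to-grid} move, so no heavy machinery is needed.
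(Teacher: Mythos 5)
Your proof is correct and, more importantly, it takes a genuinely different route from the paper. The paper argues \emph{globally}: it overlays the arrangement of the $2n$ horizontal and vertical lines through the customers, notes that $\val$ is a bounded linear function of $(p,q)$ on each open cell, and concludes that the maximum over each cell closure is attained at a vertex of the arrangement (which is a grid point $(p_i,q_j)$). Your proof is a \emph{local exchange} (snap-to-grid): assuming $\val(p^*,q^*)>0$, nonemptiness of the consideration set $S$ guarantees the existence of the nearest grid coordinates $p_i\ge p^*$ and $q_j\le q^*$, and moving to them leaves $S$ unchanged while strictly increasing the margin $p-q$, contradicting optimality. Your version is more elementary (no appeal to arrangements or to ``linear $\Rightarrow$ vertex-extremal'') and also makes explicit why the relevant extrema exist, which the paper leaves implicit behind its ``obvious bounds'' $\min_i p_i\le p^*\le\max_i p_i$, $\min_i q_i\le q^*\le\max_i q_i$. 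The paper's arrangement view, on the other hand, is the one that scales naturally to the higher-dimensional sections later in the paper.

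One small caveat, which applies equally to the paper's proof: in the fully degenerate case where $\val^*=0$ (every customer has $p_i\le q_i$, so no profit is possible at all), the optimum value $0$ is attained only at products that no customer considers, and a grid point $(p_i,q_j)$ with nonempty $S$ may have strictly negative value. Your line ``exhibiting some grid point achieving this value'' would then fail. This is harmless for the algorithm's purposes, but since you flagged the degenerate case explicitly, it is worth noting that the clean statement to prove (and what the algorithm actually needs) is: \emph{either} $\val^*>0$ and the maximizer is a grid point, \emph{or} $\val^*=0$ and no product yields positive profit. Your main argument handles the first branch cleanly.
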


\begin{proof}
  First, observe the obvious bounds on $p^*$ and $q^*$:
  \[
     \min\{p_i:i\in\{1,\ldots,n\}\} \le p^* 
      \le \max\{p_i:i\in\{1,\ldots,n\}\} 
  \] 
  and 
  \[
     \min\{q_i:i\in\{1,\ldots,n\}\} \le q^* 
      \le \max\{q_i:i\in\{1,\ldots,n\}\} \enspace .
  \] 
  Consider the arrangement of lines obtained by drawing a
  horizontal and vertical line through each customer $(p_i,q_i)$
  for $i\in\{1,\ldots,n\}$.  Within each cell of this arrangement,
  the function $\val(p,q)$ is a linear function of $p$ and $q$ and it is
  bounded.  Therefore, within a particular cell, the function is maximized
  at a vertex.  Since each vertex is the intersection of a horizontal
  and vertical line through a pair of customers, the lemma follows.
\end{proof}

The following lemma, illustrated in \figref{lemma-monotone}, is a
little more subtle and illustrates a manufacturer's preference for
lower-quality products:
\begin{figure}
  \begin{center}
    \includegraphics{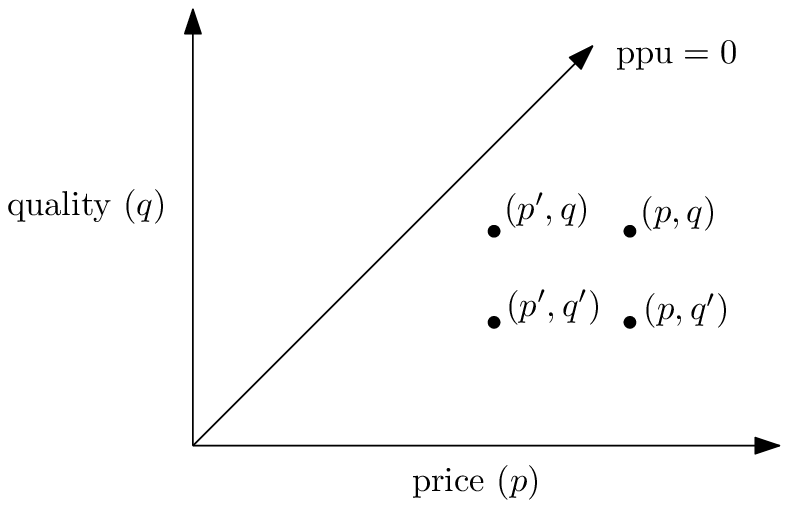}
  \end{center}
  \caption{$\val(p,q) \le \val(p,q')$ implies that $\val(p',q) \le
           \val(p',q')$ for all $p' \le p$.}
  \figlabel{lemma-monotone}
\end{figure}

\begin{lem}\lemlabel{monotone}
  Let $q' \le q$ and let $p$ be such that $0 < \val(p,q) \le \val(p,q')$.
  Then, for any $p' \le p$, $\val(p',q) \le \val(p',q')$.
\end{lem}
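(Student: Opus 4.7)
The plan is to treat the difference
\[
  h(s) := \val(s, q') - \val(s, q) = (s - q') B(s) - (s - q) A(s)
\]
as a function of the price $s$, where $A(s) = |\{i : p_i \ge s, q_i \le q\}|$ and $B(s) = |\{i : p_i \ge s, q_i \le q'\}|$, and to perform a plane-sweep showing that $h$ is non-decreasing as $s$ decreases from $p$ to $p'$. Since the hypothesis is $h(p) \ge 0$ and the conclusion is $h(p') \ge 0$, this suffices.

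Between breakpoints (values $s = p_i$), $A$ and $B$ are constant and $h'(s) = B(s) - A(s) \le 0$ (using $q' \le q$ to get $B \le A$), so $h$ is non-decreasing as $s$ decreases. At an event $s = p_i$ with $q_i \le q'$, call it a \emph{low-quality} event, $A$ and $B$ both jump up by $1$ and a short calculation gives $h$ an upward jump of $q - q' \ge 0$; at an event with $q_i > q$, neither $A$ nor $B$ changes; the only events that can decrease $h$ are \emph{medium-quality} events with $q' < q_i \le q$, at which $h$ changes by $q - p_i$, which is dangerous precisely when $p_i > q$.

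The main step is to rule out medium-quality events during the sweep, and this is where Pareto optimality enters. Since $\val(p, q') \ge \val(p, q) > 0$, we have $B(p) \ge 1$, so there is a customer $C_{j_0}$ with $p_{j_0} \ge p$ and $q_{j_0} \le q'$. If some customer $C_j$ had $p_j < p$ and $q_j > q'$, then $q_j > q' \ge q_{j_0}$ and $p_j < p \le p_{j_0}$, violating Pareto optimality; hence every customer $C_j$ with $p_j < p$ satisfies $q_j \le q'$. Therefore every event encountered as $s$ decreases from $p$ to $p'$ is a low-quality event (or a no-op for customers with $q_i > q$), so $h$ is non-decreasing on $[p', p]$ and $h(p') \ge h(p) \ge 0$, as required. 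I expect that identifying this combinatorial consequence of Pareto optimality -- namely, that the hypothesis forces every customer with $p_i < p$ to have $q_i \le q'$ -- will be the main obstacle, since one can easily construct non-Pareto-optimal markets in which the conclusion fails.
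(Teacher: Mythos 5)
Your proof is correct, and it reaches the same crux as the paper (the consequence of Pareto optimality) but by a genuinely different route. The paper argues in one shot: writing $a$, $a'$ for the customer counts at price $p$ and $M'$ for the customers with price in $[p',p)$, it observes that, by Pareto optimality, every customer in $M'$ has quality requirement $\le q'$, so $\val(p',q')=(a'+|M'|)(p'-q')$ and $\val(p',q)=(a+|M'|)(p'-q)$; it then chains together the elementary inequalities $q\ge q'$, $a\ge a'$, $p'\le p$, and the hypothesis $a'(p-q')\ge a(p-q)$ to conclude directly. You instead recast the claim as the monotonicity of the difference $h(s)=\val(s,q')-\val(s,q)$ and sweep $s$ downward, splitting the change in $h$ into a continuous drift with slope $B-A\le 0$ and jumps at breakpoints; after ruling out medium-quality breakpoints (and noting high-quality ones are no-ops), all remaining jumps are of size $q-q'\ge 0$. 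This is an incremental version of the paper's aggregate computation: the paper adds all of $M'$ at once, you add one customer at a time. Your version has the merit of localizing exactly where a failure could occur (a customer with $p_i<p$ and $q'<q_i$), which makes the role of Pareto optimality transparent; the paper's version is terser and does not explain why Pareto optimality yields $q_i\le q'$ for $C_i\in M'$, a gap you fill correctly by exhibiting the witness $C_{j_0}$ whose existence follows from $\val(p,q')>0$.
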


\begin{proof}
  By definition, $\val(p,q) = a(p-q)$ and $\val(p,q') = a'(p-q')$, where
  $a$ and $a'$ are the number of customers who would consider $(p,q)$
  and $(p,q')$, respectively.  These customers are all taken from the
  set $M_\ge =\{(p_i,q_i)\in M: p_i \ge p\}$.

  Now, consider the customers in the set $M'=\{(p_i,q_i)\in M: p' \le
  p_i < p\}$.  By the assumption that customers are Pareto optimal,
  any customer $(p_i,q_i)$ in $M'$ has $q_i \le q'$, so all of these
  customers will consider either $(p',q')$ or $(p',q)$ if either one
  is offered.  Therefore,
  \[
    \begin{aligned}
      \val(p',q')
        &  =   (a'+|M'|)(p'-q') \\
        &  =   a'(p'-q') + |M'|(p'-q') \\
        & \ge  a'(p'-q') + |M'|(p'-q) 
               && \mbox{since $q > q'$} \\
        &  =   a'(p-q') + a'(p'-p) + |M'|(p'-q) \\
        & \ge  a'(p-q') + a(p'-p) + |M'|(p'-q) 
               && \mbox{since $a \ge a'$ and $(p'-p) < 0$} \\
        & \ge  a(p-q) + a(p'-p) + |M'|(p'-q) 
               && \mbox{by assumption} \\
        &  =  a(p'-q) + |M'|(p'-q) \\
        &  =  \val(p',q) \enspace , \\
    \end{aligned}
  \]
  as required.
\end{proof}

\lemref{monotone} allows us to apply the plane sweep paradigm with a sweep
by decreasing price.  It tells us that, if a product $(p,q')$ gives better
profit than the higher-quality product $(p,q)$ at the current price $p$,
then it will always give a better profit for the remainder of the sweep.
In particular, there will never be a reason to consider a product with
quality $q$ for the remainder of the algorithm's execution.

Let the customers be labelled $(p_1,q_1),\ldots,(p_n,q_n)$ in decreasing
order of $p_i$, so that $p_{i+1} \le p_i$ for all $i\in\{1,\ldots,n-1\}$.
At any point in the sweep algorithm, there is a current price $p$, which
starts at $p=\infty$ and takes on the values $p_1,\ldots,p_n$,
successively, during the execution of the algorithm.  At all times, the
algorithm maintains a list $L$ of qualities $q_1^* > q_2^* > \cdots >
q_m^*$ such that $\val(p,q_1^*) > \val(p,q_2^*) >\cdots>\val(p,q_m^*)$.
The quality $q_1^*$ is the optimal quality for the current price, $p$.  By
the time the algorithm terminates, the quality of the globally-optimal
solution will have appeared as the first element in $L$.

To complete the description of the algorithm, all that remains is to
show how $L$ is updated during the processing of a sweep line event.
For this, the algorithm uses an auxiliary structure $D$ to efficiently
identify items in $L$ that need to updated.  Consider a consecutive
pair of the elements $q^*_i$ and $q^*_{i+1}$ in $L$.  When $q^*_i$ and
$q^*_{i+1}$ became adjacent in $L$, it was at some price $p=p_t$ such
that $\val(p_t,q^*_i)>\val(p_t,q^*_{i+1})$.  Let $a_i$ and $a_{i+1}$
be the number of customers who would consider $(p_t,q^*_i)$ and
$(p_t,q^*_{i+1})$, respectively. Then,
\[
  \val(p_t,q^*_i) = (p_t-q^*_i)a_i
\]
and
\[
  \val(p_t,q^*_{i+1}) = (p_t-q^*_{i+1})a_{i+1}
\]
Now, looking forward in time to a later step in the execution of
the algorithm, when $p=p_{t'}$, with $t'> t$, we find that
\[
  \val(p_{t'},q^*_i) = (p_{t'}-q^*_i)(a_i+t'-t)
\]
and
\[
  \val(p_{t'},q^*_{i+1}) = (p_{t'}-q^*_{i+1})(a_{i+1}+t'-t) \enspace .
\]
We are interested in identifying when the inequality
$\val(p_{t'},q^*_i) > \val(p_{t'},q^*_{i+1})$ changes to 
$\val(p_{t'},q^*_i) \le \val(p_{t'},q^*_{i+1})$.  That is, we need to
identify all indices $i$ for which, 
\begin{equation}
  (p_{t'}-q^*_i)(a_i+t'-t) \le (p_{t'}-q^*_{i+1})(a_{i+1}+t'-t)  \enspace ,
    \eqlabel{halfplane}
\end{equation}
at which point $q^*_i$ should be removed from $L$.  Observe that
the values of $a_i$, $a_{i+1}$, $q^*_i$, $q^*_{i+1}$, and $t$ are all
fixed at the time $q^*_i$ and $q^*_{i+1}$ become adjacent in $L$ and
the only values that change are those of $p_{t'}$ and $t'$.  Thus,
\eqref{halfplane} defines a halfplane in the plane with axes
$p_{t'}$ and $t'$.

The auxiliary data structure $D$ used by the algorithm must therefore be
able to store halfplanes and handle insertions, deletions, and queries
of the form ``Given a point $(x,y)$ return all halfplanes that contain
$(x,y)$.'' There are several data structures that solve this problem,
but the most suitable for the current application is the recent dynamic
convex hull data structure of Brodal and Jacob \cite{bj02}.\footnote{This
data structure is infamously complicated.  It's use in our application
can, however, be replaced with a much simpler semi-dynamic data structure
\cite{ds91}.  So as not to distract from the problem at hand, we defer
the discussion on how to do this until \secref{conclusion}.}  Their
data structure allows for the insertion and deletion of halfplanes in
$O(\log n)$ time.  Given a query point $(x,y)$, the data structure is
able to, in $O(\log n)$ time, find a single halfplane (if one exists)
that contains $(x,y)$.

The data structure $D$ is used as follows.  When the sweep line is
advanced to a new price $p_{t'}$, the value $q_{t'}$ is appended to
$L$ and the halfplane defined by $q_{t'}$ and its predecessor in $L$
is inserted into $D$.  Next, the data structure $D$ is repeatedly
queried with the point $(p_{t'},t')$.  This returns a halfplane
$h$ (if any exists) that corresponds to a pair of consecutive
elements $(q_i^*,q_{i+1}^*)$ such that $\val(p_{t'},q_i^*) \le
\val(p_{t'},q_{i+1}^*)$.  The halfplane $h$ is then deleted from $D$,
$q_{i}$ is deleted from $L$, and a new halfplane corresponding to the
(now adjacent) elements $q_{i-1}$ and $q_{i+1}$ is inserted into $D$.
This process is repeated until querying $D$ with the value $(p_{t'},t')$
returns no result.

Note that, after all the processing associated with updating the price
$p_{t'}$, the first element, $q_1$, in $L$ is the value that maximizes
$\val(q_1,p_{t'})$. Thus, the algorithm need only keep track, throughout
its execution, of the highest profit obtained from the first element of
$L$, and output this value at the end of its execution.  This completes
the description of the algorithm.

\begin{thm}\thmlabel{1-d}
  There exists an $O(n\log n)$ time algorithm for
  \textsc{ProductDesign$(1)$}.
\end{thm}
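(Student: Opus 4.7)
The plan is to verify that the sweep algorithm described above is correct and to bound its running time. Sorting the customers by decreasing price takes $O(n\log n)$, and Pareto-optimality forces this order to also be decreasing in quality, so that each newly processed $q_{t'}$ really does get appended at the tail of~$L$, as the description implicitly assumes.

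For correctness, I would establish the invariant: at any moment of the sweep, for every quality $q_k$ that has ever been inserted into $L$, there is some quality $q$ currently in $L$ with $\val(p,q) \ge \val(p,q_k)$, where $p$ is the current price. This goes by induction on the number of removals. If $q_k$ is still in $L$ take $q = q_k$. Otherwise, $q_k$ was deleted at some earlier event at price $p_{t''} \ge p$ in favour of some lower quality $q'$ satisfying $\val(p_{t''},q')\ge \val(p_{t''},q_k)$; \lemref{monotone} then propagates this inequality down to the current price, giving $\val(p,q') \ge \val(p,q_k)$, and induction applied to $q'$ produces the required witness $q$ in $L$. Combined with \lemref{discrete}, which pins the global optimum to some grid point $(p_i,q_j)$, the invariant at the moment the sweep finishes processing the event at price $p_i$ shows that the first element of $L$ has value at least $\val(p_i,q_j)$. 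Since every value ever reported by the algorithm is a legitimate profit, this equals the optimum.

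For the running time, sorting is $O(n\log n)$. In each of the $n$ sweep events the algorithm does one append to $L$ and one halfplane insertion into $D$, followed by a variable number of query--delete--insert triples on $D$; each triple removes one element from $L$. Since every element of $L$ is inserted at most once over the whole execution and removed at most once, the total number of such triples is at most $n$. The total number of operations on $D$ is therefore $O(n)$, and by the Brodal--Jacob bound each costs $O(\log n)$, yielding $O(n\log n)$ overall.

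The main subtlety I anticipate is the invariant chain argument. A quality $q_k$ removed from $L$ is not ``replaced'' by a fixed witness, since the lower quality $q'$ that knocked it out may itself be removed later. The chain of \lemref{monotone} invocations is nonetheless sound because every removal in the chain occurs at a price at least~$p$, so each application propagates the inequality in the direction of decreasing price, matching the hypothesis of the lemma exactly.
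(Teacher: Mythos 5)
Your proposal is correct and takes essentially the same approach as the paper's proof: correctness rests on \lemref{discrete} plus \lemref{monotone}, and the running time follows from the amortized observation that each quality is inserted into and deleted from $L$ at most once, with each operation costing $O(\log n)$ in $D$. The chain-of-dominance invariant you prove by induction is precisely the content the paper leaves implicit when it asserts that \lemref{monotone} ``ensures that the optimal solution appears at some point as the first element of the list $L$,'' and your remark that Pareto-optimality forces $q_{t'}$ to land at the tail of $L$ usefully makes explicit a detail the paper's algorithm description takes for granted.
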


\begin{proof}
  The correctness of the algorithm described above follows from 2 facts:
  \lemref{discrete} ensures that the optimal solution is of the form
  $(p_i,q^*)$ for some $i\in\{1,\ldots,n\}$, and \lemref{monotone}
  ensures that the optimal solution appears at some point as the first
  element of the list $L$.

  The running time of the algorithm can be bounded as follows: Presorting
  the customers by decreasing order of price can be done in $O(n\log n)$
  time using any $O(n\log n)$ time sorting algorithm. Each sweep line
  event involves 1 insertion into $D$ plus some number $k$ of deletions,
  and insertions, and $k+1$ queries.  Note that each deletion in $D$
  corresponds to a deletion in $L$, and each element of $q_1,\ldots,q_n$
  is inserted into $L$ at most once.  Therefore, the total number of such
  deletions during the entire execution of the algorithm does not exceed
  $n$, and each such insertion/deletion pair takes $O(\log n)$ time.
  Since there are $n$ events, we conclude that the total running time
  of the algorithm is $O(n\log n)$, as claimed.
\end{proof}

The following theorem shows that a running time of $\Omega(n\log n)$
is inherent in this problem, even when considering approximation
algorithms.

\begin{thm}\thmlabel{1-d-lower-bound}
  Let $M$ be an instance of $\textsc{ProductDesign(1)}$ and $(p^*,q^*)$
  be a solution that maximizes $\val(p^*,q^*)$.  In the algebraic decision
  tree model of computation, any algorithm that can find a solution
  $(p,q)$ such that $2\cdot\val(p,q) > \val(p^*,q^*)$ has $\Omega(n\log n)$
  running time in the worst-case.
\end{thm}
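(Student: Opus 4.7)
The plan is to prove this lower bound by reduction from a problem with an established $\Omega(n\log n)$ lower bound in the algebraic decision tree model, such as Element Uniqueness or Max Gap (the standard Ben--Or style bounds). Given an instance of the hard problem, I would construct in $O(n)$ time an instance of \textsc{ProductDesign}$(1)$ whose optimum value differs by a factor strictly greater than $2$ between the two cases of the source decision problem. The output of any purported $2$-approximation algorithm is then sufficient to tell the two cases apart, so the $\Omega(n\log n)$ bound for the source problem transfers.

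For the construction, a natural starting point is to place a customer at $(M + x_i, x_i)$ for each input value $x_i$, where $M$ is a carefully chosen constant. Because both coordinates are monotone in $x_i$, the Pareto-optimality condition is automatically satisfied, so the instance is valid regardless of the input. By \lemref{discrete}, the optimum product then has the form $(M + x_{(j)}, x_{(k)})$ for some $j\le k$ in the sorted order of the $x_i$, giving a profit of $(M + x_{(j)} - x_{(k)})(k - j + 1)$: a ``width penalty'' times a ``count factor.'' Supplementing this with a constant number of \emph{anchor} customers, whose role is to guarantee a baseline profit achievable regardless of input, forces the optimum to fall into one of a few qualitatively distinct regimes (for example, a single high-margin anchor versus a tightly clustered group of input customers). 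Tuning $M$ and the anchor positions lets us arrange for the optimal profit in the ``yes'' instances of the source problem to be at least $3V$ while in the ``no'' instances it stays at most $V$, for an appropriate threshold $V$.

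The main obstacle will be engineering the gap so that it is robust against \emph{all} $2$-approximate candidates, not just the optimum. Specifically, I have to verify that no grid-point product $(M + x_{(j)}, x_{(k)})$ can simultaneously have value $> \mathrm{OPT}/2$ in the ``no'' case and $\le \mathrm{OPT}/2$ in the ``yes'' case (or vice versa). This is where the anchor customers earn their keep: they pin down the profit landscape enough that every candidate solution is either clearly ``anchor-driven'' or clearly ``cluster-driven,'' and the two families have non-overlapping profit ranges once the factor-of-$2$ slack is accounted for.

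Once the gap construction is in place, the reduction itself is immediate: run the supposed $2$-approximation algorithm on the constructed instance, read off the profit of its output, and decide the source problem based on whether this profit exceeds the threshold $3V/2$. Since the construction takes $O(n)$ time, an $o(n\log n)$ time $2$-approximation would yield an $o(n\log n)$ algorithm for the source problem in the algebraic decision tree model, contradicting the known lower bound and establishing \thmref{1-d-lower-bound}.
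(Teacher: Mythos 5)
Your overall plan—reduce from \textsc{Element-Uniqueness} via a construction of the form $(M + x_i, x_i)$—is exactly the route the paper takes, but you leave the crucial part unexecuted and introduce complications that turn out to be unnecessary. The paper simply sets $M = 1/2$, so each customer is $(x_i + 1/2,\, x_i)$, and no anchor customers are needed. With this choice the gap falls out immediately from integrality: a customer $(x_i+1/2, x_i)$ considers a product $(p,q)$ exactly when $x_i \in [p-1/2,\, q]$, an interval of length $q-p+1/2$. If $p-q>0$ the interval has length $<1/2$, so when all $x_i$ are distinct integers it traps at most one of them, and the optimum profit is exactly $1/2$ (take $(x_i+1/2,x_i)$ for any $i$). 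If some integer repeats, the product sitting at that integer sells to two or more customers at $\ppu = 1/2$, giving profit $\ge 1$. That is a factor-$2$ separation, which is all the theorem statement needs given the strict inequality $2\val(p,q)>\val(p^*,q^*)$: in the repeated case any admissible output has $\val>1/2$, while in the distinct case every feasible point has $\val\le\val(p^*,q^*)=1/2$. Your stated worry about ``verifying the gap against all $2$-approximate candidates'' is a red herring: the no-side bound holds for \emph{every} feasible point because the optimum itself is $1/2$, and the yes-side bound holds for every admissible output directly from the approximation guarantee, so no separate candidate-by-candidate check (and hence no anchor machinery and no $3{:}1$ gap) is required. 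As written, your proposal is a plan rather than a proof—you never pin down $M$, never establish the interval/integrality observation that makes the gap clean, and defer the key verification—so it does not yet establish the theorem.
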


\begin{proof}
  We reduce from the integer \textsc{Element-Uniqueness} problem, which
  has an $\Omega(n\log n)$ lower bound in the algebraic decision tree
  model \cite{y91}: Given an array $A=[x_1,\ldots,x_n]$ containing
  $n$ integers, are all the elements of $A$ unique?

  We convert $A$ into an instance of $\textsc{ProductDesign}(1)$ in
  $O(n)$ time as follows (refer to \figref{lemma-lower-bound}).  For each
  $x_i$, $i\in\{1,\ldots,n\}$ we introduce a customer $(p_i,q_i)$ with
  $p_i=q_i+1/2$ and $q_i=x_i$.  If there exists a value $x$ in $A$ that
  occurs $2$ or more times, then the product $(x+1/2,x)$ gives a value
  $\val(x+1/2,x) \ge 1$.  On the other hand, if there is no such $x$,
  then 
  \begin{enumerate} 
    \item any product $(p,q)$ with $p-q>1/2$ can not be sold to any
    customers and
    \item any product $(p,q)$ with $p-q>0$ can be sold to at most $1$
    customer.
  \end{enumerate} 
  Therefore, if all the elements of $A$ are unique, then $\val(p^*,q^*)
  = 1/2$, otherwise $\val(p^*,q^*) \ge 1$.  The result follows.
\end{proof}

\begin{figure}
  \begin{center}
    \includegraphics{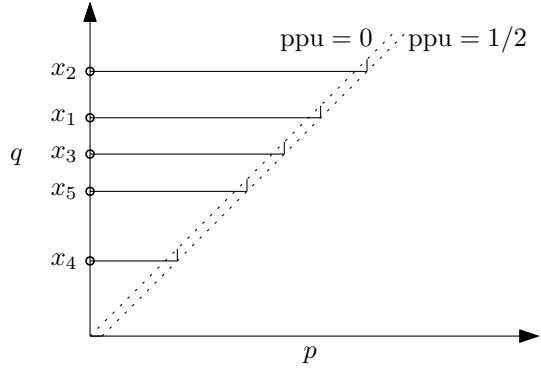}
  \end{center}
  \caption{Reducing \textsc{Element-Uniqueness} to
           $\textsc{ProductDesign(1,1)}$.}
  \figlabel{lemma-lower-bound}
\end{figure}

\section{A near-linear approximation algorithm for bidimensional products}
\seclabel{2-d}

In this section, we consider algorithms for $\textsc{ProductDesign}(2)$, in
which products have 2 qualities.  As a baseline, we first observe that, if
we fix the value of $q_2$, then the optimal solution of the form $(p, q_1,
q_2)$ can be found using a single application of the algorithm in
\thmref{1-d}.  Therefore, by successively solving the problem for each $q_2
\in\{q_{2,1},\ldots,q_{2,n}\}$ and taking the best overall solution we
obtain an $O(n^2\log n)$ time algorithm for $\textsc{ProductDesign}(2)$.

More generally, $\textsc{ProductDesign}(d)$ can be solved using
$O(n^{d-1})$ applications of \thmref{1-d} resulting in an $O(n^d\log n)$
time algorithm.  Unfortunately, these are the best results known for $d \ge
2$, and, as discussed in \secref{conclusion}, we suspect that an algorithm
with running time $o(n^d)$ will be difficult to achieve using existing
techniques.  Therefore, in this section we focus our efforts on obtaining a
near-linear approximation algorithm.

Fix some constant $\epsilon > 0$.  Given an instance $M$ of
$\textsc{ProductDesign}(d)$, a point $P\in\R^{d+1}$ is a
$(1-\eps)$-approximate solution for $M$ if $\val(P) \ge (1-\eps)\val(P^*)$
for all $P^*\in\R^{d+1}$.  An algorithm is a (high probability)
\emph{Monte-Carlo $(1-\eps)$-approximation algorithm} for
$\textsc{ProductDesign}(d)$ if, given an instance $M$ of size $n$, the
algorithm outputs a $(1-\eps)$-approximate solution for $M$ with
probability at least $1-n^{-c}$ for some constant $c>0$.

Let $r=\max\{\ppu(C_i) : i\in\{1,\ldots, n\}\}$ and observe that $r$
is the maximum profit per unit that can be achieved in this market.
Let $E=1/(1-\eps)$ and let $\ell = \lceil\log_E n\rceil$ and observe
that $\ell = O(\eps^{-1}\log n)$.\footnote{This can be seen by taking the
limit $\lim_{\eps\rightarrow 0^+} (\eps/\log(E))$ using one application
of L'H\^opital's Rule.} For each $i\in\{0,1,2,\ldots,\ell\}$, define the
plane $H_i = \{ (p,q_1,q_2) : p-q_1-q_2 = r(1-\eps)^i \}$.  The following
lemma says that a search for an approximate solution can be restricted
to be contained in one of the planes $H_i$.

\begin{lem}\lemlabel{plane-approx}
  For any product $P^*=(p^*,q_1^*,q_2^*)$, there exists a product
  $P=(p,q_1,q_2)$ such that $P\in H_i$ for some $i\in\{0,\ldots,\ell\}$
  and $\val(P) \ge (1-\eps)\val(P^*)$.
\end{lem}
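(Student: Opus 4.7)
The plan is to construct the approximating product $P$ by lowering only the price of $P^*$ until $P$ lands on one of the prescribed planes $H_j$, and to handle separately the degenerate case in which $\ppu(P^*)$ falls below the smallest level $r(1-\eps)^\ell$. Before starting, I would dispose of the trivial case $\val(P^*) \le 0$ (any $P$ works), so I may assume at least one customer purchases $P^*$ and $\ppu(P^*) > 0$. Any such customer $C_i$ satisfies $p^* \le p_i$ and $q_k^* \ge q_{i,k}$, which gives $\ppu(P^*) \le \ppu(C_i) \le r$.

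For the main case, suppose $\ppu(P^*) \ge r(1-\eps)^\ell$ and let $j$ be the smallest integer in $\{0,\dots,\ell\}$ with $r(1-\eps)^j \le \ppu(P^*)$. Define $P = (p,q_1^*,q_2^*)$ with $p = p^* - (\ppu(P^*) - r(1-\eps)^j)$, so that $p \le p^*$ and $P \in H_j$. Since the qualities are unchanged and the price has only decreased, every customer who considered $P^*$ still considers $P$, and their cheapest available option is no more expensive than before, so they still purchase $P$. By minimality of $j$ we have either $j=0$ and $\ppu(P) = \ppu(P^*)$, or $\ppu(P^*) < r(1-\eps)^{j-1}$ and $\ppu(P) = r(1-\eps)^j > (1-\eps)\ppu(P^*)$. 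Multiplying $\ppu(P) \ge (1-\eps)\ppu(P^*)$ by the (weakly larger) number of purchasers then gives $\val(P) \ge (1-\eps)\val(P^*)$.

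For the leftover case $0 < \ppu(P^*) < r(1-\eps)^\ell$, I would abandon $P^*$ entirely and pick any customer $C_{i^*}$ achieving $\ppu(C_{i^*}) = r$, setting $P = C_{i^*}$ itself. Then $\ppu(P) = r$, so $P \in H_0$, and $C_{i^*}$ certainly purchases $P$, giving $\val(P) \ge r$. On the other hand, the choice $\ell = \lceil \log_E n \rceil$ guarantees $(1-\eps)^\ell \le 1/n$, so using the trivial bound of $n$ purchasers, $\val(P^*) \le n\cdot\ppu(P^*) < n\cdot r/n = r \le \val(P)$, and the inequality is immediate.

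The only delicate step is confirming, in the main case, that a customer who bought $P^*$ really does buy the cheaper replacement $P$ under the purchase rule; since $P$ is strictly cheaper and the customer's remaining options are unchanged, this is essentially by definition. Everything else is routine manipulation of the geometric progression $r(1-\eps)^i$, and I do not expect any further obstacles.
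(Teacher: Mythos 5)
Your proof is correct and follows essentially the same two-case strategy as the paper: a fallback to the maximum-$\ppu$ customer when $\ppu(P^*)$ is small, and a map of $P^*$ onto the nearest plane $H_j$ in the geometric grid otherwise. The only minor difference is that you lower the price alone while keeping the qualities fixed, whereas the paper uses the orthogonal projection onto $H_j$ (which decreases $p$ and increases each $q_k$); both moves can only enlarge the set of customers who consider the product, so both work.
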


\begin{proof}
  There are two cases to consider.  If $\ppu(P^*) \le r/n$ then $\val(P^*)
  \le r$, in which case we set $P=C_i$ where $\ppu(C_i) = r$, so that
  $P\in H_0$ and $\val(P) = r \ge \val(P^*)\ge (1-\eps)\val(P^*)$,
  as required.

  Otherwise, $r/n < \ppu(P^*) \le r$.  In this case, consider the
  plane $H_i$ where $i = \lceil\log_E (r/\ppu(P^*))\rceil$.  Notice,
  that for any point $P\in H_i$, $\ppu(P) \ge (1-\eps)\ppu(P^*)$.
  More specifically, the orthogonal projection $P=(p,q_1,q_2)$ of $P^*$
  onto $H_i$ is a product with $p\le p^*$, $q_1\ge q_1^*$, and $q_2\ge
  q_2^*$.  Therefore, any customer who would consider $P^*$ would also
  consider $P$, so $\val(P) \ge (1-\eps)\val(P^*)$, as required.
\end{proof}

\lemref{plane-approx} implies that the problem of finding an approximate
solution to  $\textsc{ProductDesign}(2)$ can be reduced to a sequence of
problems on the planes $H_0,\ldots,H_\ell$.  Refer to \figref{plane}. Each
customer $C_j$ considers all products in a quadrant whose corner is $C_j$.
The intersection of this quadrant with $H_i$ is a (possibly empty)
equilateral triangle $\Delta_{i,j}$.  The customer $C_j$ will consider
a product $P$ in $H_i$ if and only $P$ is in $\Delta_{i,j}$.  Thus,
the problem of solving $\textsc{ProductDesign}(2)$ restricted to the
plane $H_i$ is the problem of finding a point contained in the largest
number of equilateral triangles from the set $\Delta_i=\{\Delta_{i,j}:
j\in\{1,\ldots,n\}\}$.

\begin{figure}
  \begin{center}
    \includegraphics[width=\textwidth]{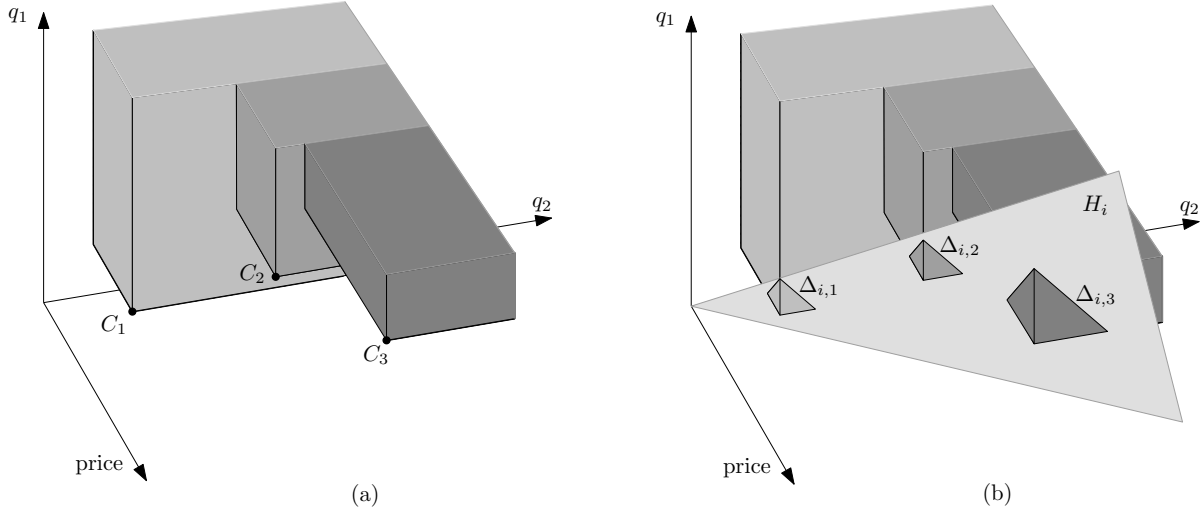}
  \end{center}
  \caption{The intersection of $H_i$ with customers' quadrants is a
          set of homothetic equilateral triangles.}
  \figlabel{plane}
\end{figure}

Note that the elements in $\Delta_i$ are \emph{homothets} (translations and
scalings) of an equilateral triangle, so they form a collection of
\emph{pseudodisks} and we wish to find the deepest point in this collection
of pseudodisks.  No algorithm with running time $o(n^2)$ is known for
solving this problem exactly, but Aronov and Har-Peled \cite{ah08} have
recently given a Monte-Carlo $(1-\eps)$-approximation algorithm for this
problem that runs in time $O(\eps^{-2}n\log n)$.  By applying this
algorithm to each of $\Delta_i$ for $i\in\{1,\ldots,\ell\}$, we obtain the
following result:

\begin{thm}
  For any $\eps >0$, there exists an $O(\eps^{-3}n(\log n)^2)$ time
  (high-probability) Monte-Carlo $(1-\eps)$-approximation algorithm for
  $\textsc{ProductDesign}(2)$.
\end{thm}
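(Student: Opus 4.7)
The plan is to compose \lemref{plane-approx} with the Aronov--Har-Peled Monte-Carlo approximation algorithm for the deepest-point problem in a collection of pseudodisks, applied separately to each of the $\ell+1 = O(\eps^{-1}\log n)$ planes $H_0,\ldots,H_\ell$, and to output the best candidate found across all planes.

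First I would rescale the target error by setting $\eps' = \eps/2$, so that the multiplicative losses from the two stages of the reduction will compose to at most $1-\eps$. For each plane $H_i$, I would construct the set $\Delta_i$ of $n$ homothetic equilateral triangles, each obtained as the intersection of $H_i$ with the quadrant of consideration of a customer $C_j$; this takes $O(n)$ time per plane. Because $\ppu(P)$ is constant on $H_i$ (equal to $r(1-\eps')^i$ by construction of $H_i$), maximizing $\val$ restricted to $H_i$ is equivalent, up to this known multiplicative factor, to finding a point contained in the largest number of triangles of $\Delta_i$. Since homothets of a fixed convex shape form a family of pseudodisks, the Aronov--Har-Peled algorithm applies to $\Delta_i$ and returns, in $O((\eps')^{-2}n\log n)$ time, a point whose depth is at least a $(1-\eps')$ fraction of the maximum depth, succeeding with probability at least $1-n^{-c}$ for any desired constant $c$.

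Correctness then follows from \lemref{plane-approx}: some index $i^*$ admits a product $P\in H_{i^*}$ with $\val(P)\ge(1-\eps')\val(P^*)$, and conditioning on the success of the Aronov--Har-Peled call on $H_{i^*}$, the returned point $P'$ satisfies $\val(P')\ge(1-\eps')\val(P)\ge(1-\eps')^2\val(P^*)\ge(1-\eps)\val(P^*)$. A union bound over the $\ell+1 = O(\eps^{-1}\log n)$ independent invocations keeps the overall failure probability inverse-polynomial in $n$ after inflating $c$ by a suitable constant.

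The running time is the $O(n)$ per-plane preprocessing plus $O(\eps^{-2}n\log n)$ per Aronov--Har-Peled call, summed over $O(\eps^{-1}\log n)$ planes, giving $O(\eps^{-3}n(\log n)^2)$ in total. There is no really difficult step here, since the geometric reduction --- homothetic triangles on each $H_i$ forming a pseudodisk family whose depth equals the number of customers considering a product of profit-per-unit $r(1-\eps')^i$ --- is immediate from \lemref{plane-approx}; the only points requiring care are the bookkeeping of the composed $(1-\eps')^2$ approximation factor and the union bound over the $O(\eps^{-1}\log n)$ Monte-Carlo invocations.
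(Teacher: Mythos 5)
Your proposal matches the paper's argument exactly: reduce via \lemref{plane-approx} to a depth problem among homothetic triangles on each plane $H_i$, invoke the Aronov--Har-Peled pseudodisk depth approximation on each, and take the best candidate, giving $O(\eps^{-1}\log n)$ invocations at $O(\eps^{-2}n\log n)$ each. The only difference is that you spell out the $\eps'=\eps/2$ rescaling to compose the two $(1-\eps')$ factors and the union bound over the $O(\eps^{-1}\log n)$ Monte-Carlo calls, details the paper leaves implicit.
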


\section{A near-linear approximation algorithm for constant $d$}
\seclabel{d-d}

In this section we extend the algorithm from the previous section to
(approximately) solve $\textsc{ProductDesign}(d)$ for any constant value of
$d$.  The algorithm is more or less unchanged, except that the proof
requires some new results on the combinatorics of arrangements of
homothets.

As before, let $r=\max\{\ppu(C_i) : i\in\{1,\ldots, n\}\}$ and let
$\ell = \lceil\log_E n\rceil$. For each $i\in\{0,1,2,\ldots,\ell\}$,
define the hyperplane $H_i = \{ (p,q_1,\ldots,q_d) : p-\sum_{i=1}^d
q_i =  r(1-\eps)^i \}$.  The following lemma has exactly the same proof
as \lemref{plane-approx}.

\begin{lem}\lemlabel{plane-approx-d}
  For any product $P^*=(p^*,q_1^*,\ldots,q_d^*)$, there exists
  a product $P=(p,q_1,\ldots,q_d)$ such that $P\in H_i$ for some
  $i\in\{0,\ldots,\ell\}$ and $\val(P) \ge (1-\eps)\val(P^*)$.
\end{lem}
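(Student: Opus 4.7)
The plan is to mirror the proof of \lemref{plane-approx} essentially verbatim, with $d$ replacing $2$ throughout; the author explicitly flags that the proof is the same, so my job is to verify that each step of the planar argument extends without incident to general dimension. The only geometric fact I really need is that orthogonal projection onto $H_i$, from a point of strictly larger profit-per-unit, shifts the price coordinate down and shifts every quality coordinate up by a common nonnegative amount. Everything else is bookkeeping.

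First I would split into the two cases $\ppu(P^*)\le r/n$ and $r/n<\ppu(P^*)\le r$. In the small case, since a product can be purchased by at most $n$ customers, $\val(P^*)\le n\cdot\ppu(P^*)\le r$. I would pick any index $k$ with $\ppu(C_k)=r$ and set $P=C_k$; by definition of $\cost$ this point satisfies $p_k-\sum_j q_{k,j}=r=r(1-\eps)^0$, so $P\in H_0$. Customer $C_k$ considers its own description, so $\val(P)\ge r\ge\val(P^*)\ge(1-\eps)\val(P^*)$.

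For the large case, I would set $i=\lceil\log_E(r/\ppu(P^*))\rceil$. The inequalities $r/n<\ppu(P^*)\le r$ immediately give $0\le i\le\ell$, so $H_i$ is one of the allowed hyperplanes, and the ceiling guarantees $r(1-\eps)^i\le\ppu(P^*)$ together with $r(1-\eps)^i\ge(1-\eps)\ppu(P^*)$. I then let $P$ be the orthogonal projection of $P^*$ onto $H_i$. The hyperplane $H_i$ has unit normal proportional to $(1,-1,\ldots,-1)\in\R^{d+1}$, and since $\ppu(P^*)\ge r(1-\eps)^i$, the signed displacement from $P^*$ to $P$ is a nonnegative multiple of $-(1,-1,\ldots,-1)$. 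Thus $p\le p^*$ and $q_j\ge q_j^*$ for every $j\in\{1,\ldots,d\}$, so any customer who considers $P^*$ also considers $P$. Combining the customer-count inequality with $\ppu(P)=r(1-\eps)^i\ge(1-\eps)\ppu(P^*)$ yields $\val(P)\ge(1-\eps)\val(P^*)$.

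There is no real obstacle; the one thing worth stating carefully is the sign computation for the projection in arbitrary dimension, since it is the only place where the geometry of $H_i$ is used. Once that is in hand, the case split and the arithmetic carry over unchanged from the two-dimensional proof.
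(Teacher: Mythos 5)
Your proof is correct and mirrors the paper's argument step for step: the same two-case split on $\ppu(P^*)$, the same choice of index $i$, and the same orthogonal-projection argument (which the paper states briefly and you spell out via the normal direction $(1,-1,\ldots,-1)$). This is exactly the adaptation the paper has in mind when it says the proof is identical to that of \lemref{plane-approx}.
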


Again, each customer $C_j$ defines a regular simplex $\Delta_{i,j}$
in $H_{i}$ such that $C_j$ will consider $P\in H_i$ if and
only if $P\in\Delta_{i,j}$.  In this way, we obtain a set
$\Delta_i=\{\Delta_{i,1},\ldots,\Delta_{i,n}\}$ of homothets of
a regular simplex in $\R^d$ and we require an algorithm to find a
($(1-\eps)$-approximation to) the point that is contained in the largest
number of these simplices.  The machinery of Aronov and Har-Peled
\cite{ah08} can be used to help solve this problem, but not before we
prove some preliminary results, the first of which is a combinatorial
geometry result.

\begin{lem}\lemlabel{arrangement}
  Let $\Delta$ be a set of $n$ homothets of a regular simplex in $\R^d$,
  for $d=O(1)$, and such that no point in $\R^d$ is contained in more
  than $k$ elements of $\Delta$.  Then, the total complexity of the
  arrangement, $A(\Delta)$, of the simplices in $\Delta$ is $O(nk^{d-1})$.
\end{lem}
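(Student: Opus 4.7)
My plan is to bound the total combinatorial complexity of $A(\Delta)$ by the number of its vertices (which, for constant $d$ and simplices with $O(1)$ facets each, dominates all other face counts up to an additive $O(n)$), and to control the vertex count using the Clarkson--Shor random-sampling framework. The base case of the sampling argument will be a linear bound on the union complexity of $m$ homothets of a regular $d$-simplex.

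The key structural observation is that the $d+1$ facets of any homothet of the regular simplex lie in hyperplanes of one of only $d+1$ fixed orientations, so distinct homothets contribute \emph{parallel} (and hence disjoint) facet hyperplanes within the same orientation class. Consequently every vertex $v$ of $A(\Delta)$ lies on at most $d$ facet hyperplanes, each from a different orientation class, and therefore involves at most $d$ distinct simplices of $\Delta$. To each vertex $v$ I attach a defining set $D(v) \subseteq \Delta$ of these simplices (so $|D(v)| \le d$) and a conflict set $K(v)$ consisting of the simplices whose interior contains $v$; by hypothesis $|D(v)| + |K(v)| \le k$.

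With this setup the Clarkson--Shor sampling theorem yields that the number of vertices of $A(\Delta)$ is $O(k^d)$ times the expected number of vertices of $A(R)$ at ``level zero'' for a random subsample $R \subseteq \Delta$ of size $\lceil n/k\rceil$, where a level-zero vertex is one whose defining simplices lie in $R$ and whose conflict simplices do not. Level-zero vertices lie on the boundary of $\bigcup R$, so their count is bounded by the boundary complexity of the union of the sampled simplices. The whole argument thus reduces to showing a linear bound $U(m) = O(m)$ on the union complexity of $m$ homothets of a regular $d$-simplex.

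Establishing this union-complexity bound is the main technical obstacle: for $d \ge 3$ the union of $m$ arbitrary convex bodies in $\R^d$ can have super-linear complexity, so the proof must exploit two special features of regular simplices. First, any two homothets of a convex body form a pseudodisk-like pair whose boundaries cross in a single topological sphere. Second, the intersection of a homothet with any hyperplane parallel to one of its $d+1$ facet orientations is either empty or itself a regular $(d-1)$-simplex homothetic to the facets. Together these features suggest an induction on $d$ in which the union is analyzed face by face within each of the $d+1$ orientation classes, with the base case $d=2$ given by the classical $O(n)$ bound on the union of pseudodisks in the plane. Substituting $U(\lceil n/k\rceil) = O(n/k)$ into the Clarkson--Shor estimate then produces the claimed $O(nk^{d-1})$ bound on the vertex count, and hence on the total complexity of $A(\Delta)$.
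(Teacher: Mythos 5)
Your proposal takes a fundamentally different route from the paper's, and unfortunately it founders on exactly the obstacle the paper itself flags. You reduce the vertex count of $A(\Delta)$, via Clarkson--Shor sampling, to a linear bound $U(m) = O(m)$ on the union complexity of $m$ homothets of a regular $d$-simplex, and you acknowledge that establishing this bound is ``the main technical obstacle.'' But that bound is \emph{false} for $d \ge 3$: the paper's remark immediately after the lemma states that the union of $n$ homothets of a regular tetrahedron in $\R^3$ can already have complexity $\Omega(n^2)$, and explicitly concludes that ``this fact makes it impossible to apply the usual Clarkson--Shor technique that relates the complexity of the first $k$ levels to that of the boundary of the union.'' So the reduction you set up cannot be completed; plugging in $U(n/k)=O(n/k)$ is not available. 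The two heuristic features you invoke to support a linear union bound also do not carry the weight you need: homothets of a convex polytope are not pseudospheres in $\R^d$ for $d\ge 3$ (their boundaries may intersect in several components), and even genuine pseudoballs such as Euclidean balls in $\R^3$ have union complexity $\Theta(n^2)$, so ``pseudodisk-like'' behavior alone never gives linearity beyond the plane. Your observation that each arrangement vertex has a defining set of size at most $d$ because facets fall into $d+1$ parallel orientation classes is correct and would be a nice ingredient if the sampling machinery applied, but it does not rescue the argument.

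The paper avoids union complexity entirely. Its proof is a direct pigeonhole-plus-packing argument: if $|A(\Delta)| > m$, some simplex $T$ carries $m/n$ features, hence $T$ meets $\Omega((m/n)^{1/(d-1)})$ other simplices; after normalizing sizes (translates first, then a shrinking trick ordered by size for general homothets), these intersecting simplices are unit-volume bodies packed into an $O(1)$-radius ball around $T$, so by volume some point is covered $\Omega((m/n)^{1/(d-1)})$ times, forcing $k = \Omega((m/n)^{1/(d-1)})$ and hence $m = O(nk^{d-1})$. The crucial property used is \emph{fatness} (diameter controls volume), not any union bound. If you want to salvage a sampling-style argument you would need a replacement for the level-$0$ base case that does not pass through $U(m)$; the paper's packing bound is essentially playing that role directly without randomization.
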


\begin{proof}
  We first consider the simpler case in which the elements of $\Delta$
  are translates (without scaling) of a regular simplex.  Suppose that
  the total complexity of $A(\Delta)$ is $m$.  Then, by the pigeonhole
  principle, there is some element $T$ in $\Delta$ that is involved
  in $m/n$ features of $A(\Delta)$.  Note that this implies that $T$
  intersects all the elements of a set $\Delta'\subseteq\Delta$ with
  $|\Delta'|=\Omega((m/n)^{1/(d-1)})$, since otherwise there are not
  enough elements in $\Delta'$ to generate $m/n$ features on the surface
  of $\Delta$.

  Observe that, since the elements of $\Delta'$ are all unit size and
  they all intersect $T$, that they are all contained in a ball of
  radius $O(1)$ centered at the center of $T$.  Furthermore, since each
  element of $\Delta'$ has volume $\Omega(1)$ this implies that some point
  must be contained in $\Omega((m/n)^{1/(d-1)})$ elements of $\Delta'$.
  Thus, we obtain the inequality $k \ge \Omega((m/n)^{1/(d-1)})$, or,
  equivalently, $m \le O(nk^{d-1})$, as required.

  Now, for the case where the elements of $\Delta$ are homothets
  (translations and scalings) of a regular simplex, we proceed as follows.
  Assume, by way of contradiction, that $|A(\Delta)| > rn$ for some $r$
  to be defined later.  Label the elements of $\Delta$ $T_1,\ldots,T_n$
  in increasing order of size and consider the smallest element $T_i$ such
  that $T_i$ contributes at least $r$ features to $A(\{T_i,\ldots,T_n\})$.
  Such a $T_i$ is guaranteed to exist, since otherwise $|A(\Delta)|\le
  rn$.

  Now, $T_i$ intersects all the elements in some set $\Delta'\subseteq
  \{T_{i+1},\ldots,T_n\}$ with $|\Delta'| = \Omega(r^{1/(d-1)})$.
  Shrink each element $T'$ in $\Delta'$ so as to obtain an element
  $T''$ such that (a)~the size of $T''$ is equal to the size of $T$ and
  (b)~$T'' \subseteq T'$.  Call the resulting set of shrunken elements
  $\Delta''$.  Condition~(a) and the packing argument above imply that
  there is a point $p\in\R^d$ that is contained in $\Omega(r^{1/(d-1)})$
  elements of $\Delta''$.  Condition~(b) implies that $p$ is contained in
  $\Omega(r^{1/(d-1)})$ elements of $\Delta'$ and hence also $\Delta$.
  Therefore, we conclude, as before that $r \le O(k^{d-1})$.  Thus,
  for a sufficiently large constant $c$, setting $r=ck^{d-1}$ yields a
  contradiction to the assumption that $|A(C)| > rn$.  We conclude that
  $|A(C)| = O(nk^{d-1})$, as required.
\end{proof}

\paragraph{Remark.} The proof of \lemref{arrangement} makes almost no use
of the assumption that the elements of $\Delta$ are regular simplices other
than using the property that their volume is related to their diameter.
Thus, a version of this lemma holds for collections of \emph{fat} objects,
a result that will probably be of independent interest.

\paragraph{Remark.} \lemref{arrangement} is somewhat surprising, since the
union of $n$ homothets of a regular tetrahedron in, for example, $\R^3$ can
easily have complexity $\Omega(n^2)$.  This fact makes it impossible to
apply the ``usual'' Clarkson-Shor technique \cite{cs89} that relates the
complexity of the first $k$ levels to that of the boundary of the union
(the 0-level).

\begin{lem}\lemlabel{arrangement-construct}
  Let $\Delta$ be a set of $n$ homothets of the regular simplex in
  $\R^d$ such that no point of $\R^d$ is contained in more than $k$
  simplices of $\Delta$.  Then the arrangement $A(\Delta)$ of $\Delta$
  can be computed in $O(n(k^{d-1}+(\log n)^{d}))$ time.
\end{lem}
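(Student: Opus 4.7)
The plan is to compute $A(\Delta)$ by randomized incremental construction. Permute $\Delta$ uniformly at random as $T_1,\ldots,T_n$ and maintain, after step $i$, the arrangement $A_i = A(\{T_1,\ldots,T_i\})$ as a $d$-dimensional cell complex with local adjacency pointers.

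First I would observe that each homothet of a fixed regular $d$-simplex is described by $d+1$ real parameters (the $d$ apex coordinates $t_1,\ldots,t_d$ and a scale $s\ge 0$), and the predicate ``a point $p\in\R^d$ lies in simplex $T$'' reduces to the conjunction ``$t_j\le p_j$ for all $j$, and $\sum_j t_j + s \ge \sum_j p_j$''. This is a $(d+1)$-dimensional dominance condition on the simplex parameters, so as preprocessing I build a $(d+1)$-level range tree on $\Delta$ in $O(n(\log n)^d)$ time, equipping each simplex with its insertion-time index. For any query point $p$ and any prefix $\{T_1,\ldots,T_i\}$ this tree reports the simplices of the prefix that contain $p$ in $O((\log n)^d + k)$ time, since at most $k$ simplices of $\Delta$ can contain any given point.

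To insert $T_{i+1}$ into $A_i$, I would (a)~query the range tree with the apex of $T_{i+1}$, restricted to indices $\le i$, in order to find the cell of $A_i$ containing that apex; and (b)~starting from this cell, trace the boundary of $T_{i+1}$ through $A_i$ via the adjacency pointers, splitting each feature met and recording the new features of $A_{i+1}$. Since the arrangement only grows under insertion, the total cost of all boundary traces telescopes to $O(|A_n|) = O(nk^{d-1})$ by \lemref{arrangement}, which applies to every prefix because each prefix inherits the depth-$k$ bound from $\Delta$. The point-location cost is $O((\log n)^d + k)$ per insertion, so over the whole execution contributes $O(n(\log n)^d + nk)$; this is absorbed into $O(n(\log n)^d + nk^{d-1})$ whenever $d\ge 2$, while the case $d=1$ is handled directly by sorting.

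The main obstacle will be the bookkeeping that maps the set $S\subseteq\{1,\ldots,i\}$ returned by a dominance query to a pointer to the specific cell of $A_i$ that the apex lives in. The cell is uniquely determined by $S$ (two points lie in the same cell of $A_i$ iff the same prefix-simplices contain them), but one still needs to retrieve a handle to it in $O(|S|)$ time without searching the whole complex. I plan to handle this by keeping, for each already-inserted simplex $T_j$, a pointer to one cell of $A_i$ that $T_j$ bounds; given $S$, pick any $T_j\in S$ and walk among the cells bounded by $T_j$ using adjacency pointers to locate the unique cell whose signature is exactly $S$. These pointers can be maintained under each insertion with only $O(1)$ amortized work per structural change, so they do not affect the overall running-time bound of $O(n(k^{d-1}+(\log n)^d))$.
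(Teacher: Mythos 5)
Your approach is genuinely different from the paper's, and it has a gap that the paper's design is specifically constructed to avoid. The paper does \emph{not} do randomized incremental construction or boundary tracing. It sorts the simplices by \emph{decreasing size} and, when processing $T$, queries a $(d+1)$-level range tree for all simplices intersecting $T$ and discards the smaller ones. The crucial fact, extracted from the proof of \lemref{arrangement}, is that $T$ intersects only $O(k)$ simplices that are at least as large as itself. The arrangement of those $O(k)$ boundary hyperplanes restricted to the $d+1$ facets of $T$ is then built in $O(k^{d-1})$ time by $d+1$ invocations of the standard hyperplane-arrangement algorithm in $\R^{d-1}$, and each feature of $A(\Delta)$ is charged to the smallest simplex incident to it. The total intersection-reporting output over all $n$ queries is bounded by the number of intersecting pairs, which is at most $|A(\Delta)|=O(nk^{d-1})$, giving the claimed time. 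No zone or trace analysis is ever needed.

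Your proof has a genuine hole in the telescoping claim. The cost of tracing $\partial T_{i+1}$ through $A_i$ via adjacency pointers is governed by the \emph{zone complexity} of $\partial T_{i+1}$ in $A_i$ (the total size of all faces whose closures meet $\partial T_{i+1}$), not by the number of features that are newly created. A small $T_{i+1}$ that sits mostly inside one large cell of $A_i$ may create $O(1)$ new features yet force the trace to walk around a cell boundary of size $\Theta(ik^{d-1})$ just to find the exit edge. Summing $|A_{i+1}|-|A_i|$ therefore does not bound the running time, and you have not supplied the conflict-graph-style expectation argument that a randomized incremental construction would require. The decreasing-size order in the paper is precisely what replaces this argument: it guarantees an $O(k)$ bound on the local work at insertion time, independently of the accumulated arrangement. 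A secondary flaw is the bookkeeping premise that two points lie in the same cell of $A_i$ iff the same prefix-simplices contain them. This is false: one homothet can cut another into two components (already in the plane, two edges of a large equilateral triangle can sever two disjoint corners from a smaller homothetic one), so the containment signature does not determine a unique cell and the proposed walk to ``the cell whose signature is exactly $S$'' is ill-defined. On the positive side, your reduction of point-in-simplex to a $(d+1)$-dimensional dominance condition is a clean observation and is the exact analogue of the paper's mapping of simplex intersection to halfspace range reporting; both yield the same $O(n(\log n)^d)$-preprocessing range tree, but what the tree is asked to report (intersectors filtered by size, rather than containers of a point) is what makes the paper's time bound go through.
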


\begin{proof}
  Computing the arrangement $A(\Delta)$ can be done in the following way.
  Sort the elements of $\Delta$ by decreasing size and construct $A(C)$
  incrementally by inserting the elements one by one.  When inserting
  an element $T$, use a data structure (described below) to retrieve
  the elements of $\Delta$ that intersect $T$ and discard the elements
  that are smaller than $T$.  The proof of \lemref{arrangement} implies
  that there will be at most $O(k)$ such elements.  The intersection
  of the surfaces of these $O(k)$ elements with the surface of $T$ has
  size $O(k^{d-1})$ and can be computed in $O(k^{d-1})$ time using $d+1$
  applications of the standard algorithm for computing an arrangement
  of hyperplanes in $\R^{d-1}$ \cite{eos86,ess93}.  Thus, ignoring the
  cost of finding the elements that intersect $T$, the overall running
  time of the algorithm is $O(nk^{d-1})$.

  All that remains is to describe a data structure for retrieving the
  elements that intersect a given simplex $T\in\Delta$.  In the following
  we describe a data structure that can be constructed in $O(n(\log
  n)^{d})$ time and can answer queries in $O(x + (\log n)^{d})$ time,
  where $x$ is the size of the output.  This data structure will be
  constructed once and queried $n$ times.  The total size of the outputs
  over all $n$ queries will be the $O(|A(C)|)=O(nk^{d-1})$.

  Since the elements of $\Delta$ are homothets, each element of $\Delta$
  can be described concisely by the lexicographically smallest of its
  coordinates and one real number describing its size. In this way,
  each element of $\Delta$ can be represented as a point in $\R^{d+1}$
  and thus $\Delta$ can be represented as a set $\Delta'\subseteq\R^{d+1}$
  of $n$ points .

  Observe that, given a half space $h$ in $\R^d$, the set of elements of
  $\Delta$ intersected by $h$ is given by the set of points of $\Delta'$
  contained in some halfspace $h'=f(h)$ in $\R^{d+1}$.  Furthermore,
  the transformation $f$ preserves parallel halfspaces.  That is,
  if $h_1,h_2\subseteq{\R^d}$ are halfspaces that have the same inner
  normal then $f(h_1)$ and $f(h_2)$ also have the same inner normal.

  Now, observe that any simplex $T\in\Delta$ can be expressed as the
  intersection of $d$ closed halfspaces, all of whose inner normals are
  taken from the set of $N=\{n_1,\ldots,n_{d+1}\}\subset \Sp^{d-1}$ of
  inner normals of the facets of a regular simplex.  Let $f(n_i)\in\Sp^d$
  denote the inner normal of a halfspace $f(h)$, where $h\subset\R^d$
  is a halfspace whose inner normal is $n_i$.

  The data structure for storing $\Delta'$ is a $d+1$ layer range tree
  \cite{b75} in which the $i$th layer, for $i\in\{1,\ldots,d+1\}$,
  stores its points ordered by their projection onto $f(n_i)$.  In this
  way, the range tree can return the set of all simplices in $\Delta$ that
  intersect $T$.  The size of this range tree is $O(n(\log n)^{d})$
  and it can answer queries in time $O(x+n(\log n)^{d})$ where $x$
  is the size of the query result.  Since each simplex in $\Delta$
  is passed as a query to this data structure exactly once, the total
  sizes of outputs over all $n$ queries is equal to the number of
  pairs $T_1,\ldots,T_2\in\Delta$ such that $T_1\cap T_2\neq\emptyset$.
  But the number of such pairs is certainly a lower bound on $|A(\Delta)|$
  so it must be at most $O(nk^{d-1})$.  This completes the proof.
\end{proof}

\lemref{arrangement-construct} can be used as a subroutine in the
algorithm of Aronov and Har-Peled \cite[Theorem~3.3]{ah08}, to obtain
the following Corollary.

\begin{cor}\corlabel{deep-point}
  Let $\Delta$ be a set of $n$ homothets of a regular simplex in $\R^d$
  such that some point $p\in\R^d$ is contained in $\delta$ elements
  of $\Delta$.  Then there exists an algorithm whose running time is
  $O(\eps^{-2d}n(\log n)^{d-1} + n(\log n)^d)$ and that, with high
  probability, returns a point $p'\in\R^{d}$ contained in at least
  $(1-\eps)\delta$ elements of $\Delta$.
\end{cor}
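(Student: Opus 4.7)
The plan is to follow the approximation framework of Aronov and Har-Peled~\cite{ah08}, originally designed for pseudodisks in the plane, and substitute \lemref{arrangement-construct} for its 2D arrangement-construction subroutine. In outline, their algorithm guesses the target depth $\delta$ via exponential search (only $O(\log n)$ guesses are needed), and for each guess draws a random subset $R\subseteq\Delta$ of size chosen so that the maximum depth in $A(R)$ is $k = O(\eps^{-2}\log n)$ with high probability. It then builds $A(R)$ explicitly and returns any point in the cell of $A(R)$ of maximum depth. A standard Chernoff plus VC-dimension argument converts ``depth in the sample'' into ``approximate depth in $\Delta$''.

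The 2D analysis of Aronov and Har-Peled is specific to $\R^2$ in only two places: the arrangement subroutine it invokes, and the combinatorial bound on arrangements of bounded depth that its sample-size analysis relies on. Both have $d$-dimensional analogues in the present paper: \lemref{arrangement} supplies the required $O(sk^{d-1})$ complexity bound, and \lemref{arrangement-construct} supplies an algorithm that attains it in time $O(sk^{d-1}+s(\log s)^d)$. Plugging the sample size $s=\Theta(\eps^{-2}(n/\delta)\log n)$ and $k=\Theta(\eps^{-2}\log n)$ into \lemref{arrangement-construct}, summing the geometric series over the $O(\log n)$ exponential guesses of $\delta$, and simplifying should yield the stated $O(\eps^{-2d}n(\log n)^{d-1}+n(\log n)^d)$ running time; the guess that matches the true value of $\delta$ dominates the sum, while the additive $s(\log s)^d$ contribution collapses to $n(\log n)^d$ after the same summation.

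The main obstacle I expect is confirming that the probabilistic guarantees behind the Aronov-Har-Peled sampling step transfer unchanged from $\R^2$ to $\R^d$. The two inputs the analysis requires---a bound of the form $O(sk^{d-1})$ on arrangements of $s$ objects with maximum depth $k$, and a constant VC dimension for the range space induced by the objects---are both now available: the first is \lemref{arrangement}, and the second holds because homothets of a fixed simplex in constant dimension form a range space of constant VC dimension. Once these two ingredients are verified, the remainder is a mechanical substitution of $d$-dimensional parameters into their template, with no new ideas required beyond \lemref{arrangement} and \lemref{arrangement-construct}.
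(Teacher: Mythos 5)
Your proposal matches the paper's approach: the paper's ``proof'' of this corollary is a single sentence stating that \lemref{arrangement-construct} can be substituted as the arrangement-construction subroutine in the algorithm of Aronov and Har-Peled (their Theorem~3.3), and you have spelled out exactly that substitution, correctly identifying the two places where the 2D analysis must be replaced (the arrangement complexity bound from \lemref{arrangement} and the construction algorithm from \lemref{arrangement-construct}) and noting that the VC-dimension hypothesis is satisfied by homothets of a fixed simplex in constant dimension.
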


As before, an approximate solution to $\textsc{ProductDesign}(d)$ problem
reduces to finding deepest point in each of the sets
$\Delta_1,\ldots,\Delta_\ell$ where $\Delta_i$ is a set of $n$
$d$-simplices in $H_i$.  By using the algorithm of \corref{deep-point} to
do this we obtain the following result:

\begin{thm}
  For any $\eps >0$, there exists an $O(\eps^{-(2d+1)}n(\log n)^d + n(\log
  n)^{d+1})$ time (high-probability) Monte-Carlo $(1-\eps)$-approximation
  algorithm for $\textsc{ProductDesign}(d)$.
\end{thm}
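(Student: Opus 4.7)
The plan is to apply the standard reduction: by \lemref{plane-approx-d}, a near-optimal product exists within one of the hyperplanes $H_0,\ldots,H_\ell$ with $\ell = O(\eps^{-1}\log n)$; inside each $H_i$ the problem collapses to finding the deepest point in the set $\Delta_i$ of $n$ homothetic simplices induced by the customers' quadrants; and \corref{deep-point} already solves each such deepest-point problem approximately in near-linear time. Thus the theorem is essentially obtained by ``gluing'' \lemref{plane-approx-d} to \corref{deep-point}.

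Concretely, run the following with internal accuracy parameter $\eps'=\eps/2$. For each $i\in\{0,1,\ldots,\ell\}$, explicitly build the set $\Delta_i\subset H_i$, call the algorithm of \corref{deep-point} on $\Delta_i$ with parameter $\eps'$ to obtain a point $P_i\in H_i$, and compute $\val(P_i)$. Output the $P_i$ of maximum value. For correctness, fix an optimal product $P^*$ and let $P\in H_{i^*}$ be the product promised by \lemref{plane-approx-d}, so that $\val(P)\ge (1-\eps')\val(P^*)$. Since $\ppu$ is constant on $H_{i^*}$, maximizing $\val$ within $H_{i^*}$ is equivalent to finding the deepest point in $\Delta_{i^*}$; hence the output $P_{i^*}$ of \corref{deep-point} satisfies
$\val(P_{i^*})\ge (1-\eps')\max_{Q\in H_{i^*}}\val(Q)\ge(1-\eps')\val(P)\ge(1-\eps')^2\val(P^*)\ge(1-\eps)\val(P^*)$,
and the returned candidate is at least this good.

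The high-probability guarantee is preserved by a union bound over the $\ell+1=O(\eps^{-1}\log n)$ invocations of \corref{deep-point}, each of which succeeds with probability $1-n^{-c}$ for a suitably large constant $c$; since $\ell+1$ is polynomial in $n$, the overall failure probability is still polynomially small. For the running time, summing the per-call bound of $O(\eps^{-2d}n(\log n)^{d-1}+n(\log n)^d)$ from \corref{deep-point} over the $O(\eps^{-1}\log n)$ hyperplanes yields the claimed $O(\eps^{-(2d+1)}n(\log n)^d + n(\log n)^{d+1})$ time bound. Since the combinatorial and algorithmic heavy lifting has already been discharged in \lemref{arrangement}, \lemref{arrangement-construct}, and \corref{deep-point}, I do not anticipate any new technical obstacle in this final gluing step; the only care required is in bookkeeping the approximation factor (via the choice $\eps'=\eps/2$) and ensuring the union bound absorbs the polylogarithmic number of randomised subcalls.
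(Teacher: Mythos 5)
Your proposal is correct and follows essentially the same route as the paper: reduce via \lemref{plane-approx-d} to $\ell+1 = O(\eps^{-1}\log n)$ deepest-point problems on the hyperplanes $H_i$ and solve each with \corref{deep-point}. The only difference is that you explicitly split the error budget as $\eps' = \eps/2$ to account for the compounding of the two $(1-\eps)$ factors, which the paper glosses over; this is harmless for the asymptotics since $d$ is constant, and is arguably a minor improvement in rigor.
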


\section{Conclusions}
\seclabel{conclusion}

We have given an $O(n\log n)$ time exact algorithm for solving
$\textsc{ProductDesign}(1)$ and $O(n(\log n)^{d+1})$ time approximation
algorithms for solving $\textsc{ProductDesign}(d)$.  The running time of
the exact $\textsc{ProductDesign}(1)$ algorithm is optimal and no algorithm
that produces a $(2-\eps)$-approximation, for any $\epsilon > 0$, can run
in $o(n\log n)$ time.  

In developing these algorithms, we gave a proof (the
proof of \lemref{arrangement}) that shows that an arrangement of $n$ fat convex
objects in $\R^d$ has complexity $O(nk^{d-1})$ where $k$ is the maximum
number of objects that contain any given point. We expect that this result,
and the algorithm for approximate depth that arise from it \cite{ah08},
will find other applications.

Our exact algorithm for the case $d=2$ uses the recent dynamic convex
hull data structure of Brodal and Jacob \cite{bj02}, which is quite
complicated.  We observe that the use of this structure can be avoided
by using the (much simpler) semi-dynamic data structure of Dobkin and Suri
\cite{ds91}.  To use this structure, we need to specify, each time a point is
inserted, the time at which that point will be deleted.  In our case,
\lemref{monotone} implies that this deletion time can be computed by a
binary search on $p_t,\ldots,p_n$ using \eqref{halfplane}.

An exact near-linear time algorithm for the case $d=2$ seems to be out of
reach.  It appears as if this problem requires (at least) a solution to the
problem of finding a point contained in the largest number of homothets of
an equilateral triangle, a problem for which no subquadratic time algorithm
is known.  Is it possible to prove some kind of a lower bound?  The related
problem of finding the point contained in the largest number of unit disks
is \textsc{3-Sum} hard \cite{ah08} providing some evidence that this
problem will be difficult to solve in subquadratic time.

In this paper we considered the case where the problem is parameterized by
the number, $d$, of orthogonal qualities that a product may have.  Another
case to consider is the case in which a manufacturer wishes to introduce
some number, $k$, $k>1$, of new products into a market.   Is this problem
NP-hard?  Does it have a polynomial time approximation algorithm?

\section*{Acknowledgement}

The authors would like to thank Gautam~Das for bringing this class of
problems to our attention and Timothy~Chan for helpful discussions on
the subject of approximate depth.

\bibliographystyle{plain}
\bibliography{pricing}

\end{document}